\documentclass{ifacconf}
\usepackage{amsmath, amssymb, amscd, amsfonts}
\usepackage{graphicx}      
\usepackage{natbib}        
\usepackage{relsize}
\usepackage{algorithm}
\usepackage{algpseudocode}
\usepackage{pifont}
\usepackage{mathrsfs}
\usepackage{xcolor}

\newtheorem{theorem}{Theorem}

\newtheorem{assumption}{Assumption}
\newtheorem{atheorem}{Theorem}

\newenvironment{proof}{\par\noindent{\it Proof. }}{\hfill$\square$\par}
\begin{document}
\begin{frontmatter}

\title{Towards Guaranteed Optimal PID Tuning for Uncertain Nonlinear Systems} 

\thanks[footnoteinfo]{This work was supported by National Key R\&D Program of China Under Grant 2024YFA1013104, and by  National Natural Science Foundation of China under Grant U22B6001, 12288201 and 62303451.}

\author[1]{Jingru Zhu} 
\author[1]{Cheng Zhao} 
\author[1]{Lei Guo}

\address[1]{State Key Laboratory of Mathematical Sciences, AMSS, Chinese Academy of Sciences, Beijing 100190, China.\\
School of Mathematical Sciences, University of Chinese Academy of Sciences, Beijing 100049, China.\\
(e-mail: lguo@amss.ac.cn)}


\begin{abstract}  
Despite the widespread use of PID controllers in engineering practice, designing optimal PID parameters has long been regarded as a challenging problem in both theory and practice, particularly when faced with uncertain nonlinear dynamical systems.
Based on the authors' PID control theory established recently for MIMO nonlinear uncertain systems (Zhao and Guo, 2022), which provides 
a concrete PID parameter set for global stability of PID controlled systems,
this paper further proposes a near-optimal PID tuning method, where only input-output (zeroth-order) data on the control performance is available. 
The tuning method is formulated as a constrained optimization problem and solved by an iterative learning algorithm, referred to as HRS-KW algorithm, that combines a hysteretic random search with the Kiefer–Wolfowitz algorithm, aiming at utilizing the advantages of both global exploration and local gradient acceleration. 
This method operates without requiring precise structural knowledge of the system dynamics, yet 
its almost sure convergence to an $\epsilon$-optimal solution for the PID parameters can be guaranteed in theory while ensuring closed-loop system stability.
Simulation results illustrate that our HRS-KW algorithm outperforms other related optimization methods, exhibiting better convergence 
to the prescribed $\epsilon$-optimal performance set.
\end{abstract}

\begin{keyword}
PID control, optimal tuning, nonlinear uncertain system, Kiefer–Wolfowitz algorithm, random research
\end{keyword}
\end{frontmatter}

\section{Introduction}
\vspace{-8pt}
As is well-known, the proportional-integral-derivative (PID) controller has been widely used in industrial automation, and more than $90\%$ of the control loops employ PID control (\cite{baifenzhijiushi,surveySamad,ZHAO2025Beyond}). 
Despite its prevalence, many PID control loops exhibit unsatisfactory performance due to improper PID control parameters (\cite{2006pidtuning}). 
In practice, poorly tuned PID parameters may lead to oscillations, inferior transient responses, or even closed-loop instability, which can severely degrade system efficiency and safety (\cite{reviewtuning}). 
Therefore, developing systematic and optimal PID tuning methods that can guarantee both closed-loop stability and desired control performance is of great importance, and has attracted sustained and widespread attention from many control theorists and engineering practitioners.  

Over the years, a wide range of PID tuning methodologies have been developed (see e.g., \cite{reviewtuning,dilemmaofpid,stabilityRL,pid-tuning-conventional}).
Conventional tuning techniques include rule-based methods which rely on first-order plus dead-time linear models (e.g., Ziegler–Nichols and Cohen–Coon rules), and optimization-based methods that require an accurate process model (\cite{optimizationbased}).
However, most real-world industrial processes are inherently nonlinear, and subject to uncertainties, which considerably limit the applicability of these model-based approaches (\cite{pid-tuning-conventional}).
In recent years, learning-based and data-driven PID tuning strategies have gained increasing attention (\cite{ESPID,ilc-xjx,RLprocesscontrol}), particularly those built upon  extremum seeking, iterative learning and reinforcement learning.
Although these methods can effectively enhance control performance without explicit model knowledge, rigorous guarantees of closed-loop stability are generally not established, particularly for nonlinear uncertain systems (\cite{pid-tuning-neural-network}).


Recently, we have shown that the classical PID control can ensure global stability for a basic class of nonlinear uncertain systems, provided that the PID parameters are chosen within a three dimensional unbounded stability region (\cite{2017zc,Zhao2018ccc,2022towards}).
However, the problem of designing optimal PID parameters that guarantee both closed-loop system stability and desired control performance remains unresolved, which is the primary motivation of this paper.
To address this problem, we focus on a class of second-order nonlinear uncertain multi-input multi-output (MIMO) systems, 
where the control objective is to design PID control that stabilizes the system while minimizing a performance cost related to output error and control effort.
We formulate the optimal PID tuning as a constrained optimization problem, where the constraint ensures that the PID parameters generated during the tuning process remains in the stability region constructed explicitly in the work (\cite{2017zc,2022towards}). 
It is worth noting that, only the input-output (zeroth-order) data of the control performance can be accessed, which renders traditional gradient or Hessian-based optimization methods inapplicable.

Inspired by \cite{guoWLS1996,wxq,2005threshold-rs}, we propose an iterative learning algorithm in this paper which combines a hysteretic random search with the Kiefer–Wolfowitz algorithm, abbreviated as HRS-KW. 
This algorithm exploits both the global exploration of random search and local gradient acceleration of the KW algorithm.
Without assuming the convexity of the control performance cost, we prove that the proposed HRS-KW algorithm possesses a global convergence property, and converges to an $\epsilon$-optimal solution almost surely for all initial stabilizing PID gains. Such global convergence is mainly attributed to the hysteretic random search, which helps the KW algorithm escape local minima, a phenomenon further illustrated through simulations.

The remainder of this paper is organized as follows. Section II formulates the constrained optimization problem for nonlinear uncertain systems. Section III proposes the HRS-KW algorithm and presents the main results. Section IV presents numerical simulations. Finally, Section V concludes the paper with some remarks.
\vspace{-5pt}
\section{Problem Formulation} 
\vspace{-8pt}
In this paper, we investigate the performance optimization problem of PID control for nonlinear uncertain systems.
For simplicity of presentation, we consider a basic class of nonlinear MIMO system:
\begin{align}\label{system0}
\left\{
\begin{aligned}
    &\dot x_1(t) = x_2(t)\\
    &\dot x_2(t) = f(x_1(t),x_2(t),u(t))\\
    &y(t) = x_1(t)
\end{aligned}
\right.
\end{align}
where $x(t)=\begin{bmatrix}
    x_1^\mathsf{T} (t)&x_2^\mathsf{T}(t)
\end{bmatrix}^\mathsf{T} \in\mathbb{R}^{2n}$ is the state vector, $y(t)\in\mathbb{R}^n$ is the output, $u(t)\in\mathbb{R}^n$ is the input, and $f\in C^1(\mathbb{R}^{3n}, \mathbb{R}^{n})$ is an \emph{uncertain nonlinear} function.

Our control objective is to make the output $y(t)$ converge to a given setpoint $y^*\in\mathbb{R}^n$ using a classical PID control of the form
\begin{align}\label{pid}
u(t)=k_0\textstyle\int_0^te(s)\mathrm{d}s+k_1e(t)+k_2\dot e(t)
\end{align}
where $e(t)=y^*-y(t)$ is the output error and $K:=\begin{bmatrix}
    k_0&k_1&k_2
\end{bmatrix}^\mathsf{T}\in\mathbb{R}^3$ are the tunable control parameters.
At the same time, we aim to minimize the performance cost
\begin{align}\label{cost0}
J(K)=q(e(t_f))+\textstyle\int_0^{t_f} l(e(t),u(t)) \mathrm{d}t
\end{align}
where $l:\mathbb{R}^n\times \mathbb{R}^n \rightarrow \mathbb{R}_{\geq 0}$ denotes the running cost penalizing the output error and control effort, and $q:\mathbb{R}^n \rightarrow \mathbb{R}_{\geq 0}$ is the terminal cost at the given final time $t_f>0$. The functions $l(\cdot)$ and $q(\cdot)$ are required to be  nonnegative and continuous with respect to their respective variables.
In practical engineering applications, the specific forms of $l(\cdot)$ and $q(\cdot)$ are typically determined by the system requirements and design specifications. 
For instance, in the linear quadratic regulator (LQR) problem, a common and standard choice is:
\[
q(e)=e^\mathsf{T} F e,~~l(e,u)=e^\mathsf{T} Q e+u^\mathsf{T} Ru
\] 
with $Q\succeq 0, R\succ 0$, and $F\succeq 0$. Here, for a symmetric matrix $A$, $A\succeq 0$ implies that $A$ is positive semi-definite, and $A\succ 0$ implies that $A$ is positive definite.

We make the following assumption on the uncertain nonlinear function $f$, which is used to quantify the magnitude of system uncertainty and is consistent with the formulation in \cite{2022towards}.
\begin{assumption} \label{assum1}
There exist two positive constants $L_1$, $L_2$ and $\underline{b}$, such that  for all $x_1,x_2,u\in\mathbb{R}^n$,
    \begin{equation}
      \Big\|\frac{\partial f}{\partial{x_1}}\Big\|\le L_1,~\Big\|\frac{\partial f}{\partial{x_2}}\Big\|\le L_2,~
      \frac{1}{2}\Big[\frac{\partial f}{\partial{u}}+\Big(\frac{\partial f}{\partial{u}}\Big)^{\mathsf{T}}~\!\Big]
 \geq \underline{b}I_n
    \end{equation}
    where $\frac{\partial f}{\partial{x_i}}$ and $\frac{\partial f}{\partial{u}}$ are the $n\times n$ Jacobian of $f$ with respect to $x_i$ and $u$, $I_n$ is the $n\times n$ identity matrix.
\end{assumption}

Under Assumption 1, it has been demonstrated that the PID control can ensure global stability of the closed-loop system, with the three PID parameters freely chosen from
a 3-dimensional, unbounded stability region. To be specific, we have the following.
\begin{atheorem}[{\cite{2022towards}}]
Consider the PID controlled nonlinear uncertain system (1)-(2), where function $f$ satisfies Assumption 1. Suppose the PID parameters are selected from the following 3-dimensional open and unbounded set:
\begin{align}\label{a1-pid}
\Omega_{{\rm pid}}=\left\{K\in\mathbb{R}_+^3 \left|
		~k_p^2>2k_ik_d+\bar k,
		~k_d^2~\!>k_p/\underline b~\!+\bar k
\right.
\right\},
\end{align}
where $\bar k :=(L_1+L_2)(k_p+k_d)/\underline b$.
Then the solution of the closed-loop system will satisfy 
\begin{align*}
\lim_{t\to\infty} \|e(t)\|+\|\dot e(t)\|=0\end{align*}
with an exponentially fast rate,
for any setpoint $y^*\in\mathbb{R}^n$ and any initial states $x(0)\in \mathbb{R}^{2n}$.
\end{atheorem}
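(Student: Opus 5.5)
We argue as in \cite{2022towards}: rewrite the closed loop as an autonomous error system, linearize along the trajectory via the mean value theorem, and construct a quadratic Lyapunov function whose negativity is guaranteed by the inequalities defining $\Omega_{\rm pid}$. (We read $(k_i,k_p,k_d)$ as $(k_0,k_1,k_2)$.)

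\emph{Step 1: error coordinates.} Introduce
\[
z_0=\textstyle\int_0^t e\,ds+\xi,\qquad z_1=e,\qquad z_2=\dot e,
\]
where the constant shift $\xi$ is chosen so that the origin is an equilibrium. Solvability of $f(y^*,0,k_i\xi)=0$ for $\xi$ follows from the condition $\frac12(\partial_u f+\partial_u f^\mathsf{T})\ge \underline b I_n$: it makes $u\mapsto f(y^*,0,u)$ strongly monotone, hence a global homeomorphism (Browder--Minty / Hadamard). Setting $u^*=k_i\xi$, the closed loop becomes
\[
\dot z_0=z_1,\qquad \dot z_1=z_2,\qquad \dot z_2=-\bigl[f(x_1,x_2,u)-f(y^*,0,u^*)\bigr].
\]

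\emph{Step 2: linearization along the trajectory.} Apply the integral mean value theorem along the segment joining $(y^*,0,u^*)$ to $(x_1,x_2,u)$. Since $u-u^*=k_iz_0+k_pz_1+k_dz_2$, this gives
\[
\dot z_2=-H(t)\bigl(k_iz_0+k_pz_1+k_dz_2\bigr)+A(t)z_1+B(t)z_2,
\]
where $\|A\|\le L_1$, $\|B\|\le L_2$, and $H(t)+H(t)^\mathsf{T}\ge 2\underline b I_n$. The same relation holds for $\dot u$, since $u$ is a linear combination of the $z_j$.

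\emph{Step 3: Lyapunov function.} Switch to the coordinates $(z_0,z_1,v)$ with $v=u-u^*$, so that $\dot v=k_iz_1+k_pz_2+k_d\dot z_2$. Take
\[
V=\alpha\|z_0\|^2+2\beta z_0^\mathsf{T} z_1+\gamma\|z_1\|^2+\|v\|^2,
\]
with $\alpha,\beta,\gamma$ built from $k_p,k_i,k_d$ as in the PID analysis of \cite{2022towards}. The key point is that the dominant term of $\dot V$ contains $-2k_d\,v^\mathsf{T}Hv\le -2k_d\underline b\|v\|^2$, because the symmetric part of $H$ is at least $\underline b I_n$; the other contributions are cross terms.

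\emph{Step 4: negativity of $\dot V$.} Bound $\dot V\le -w^\mathsf{T}Mw$, where $w=(\|z_0\|,\|z_1\|,\|v\|)$ and $M$ is a $3\times3$ matrix whose entries depend on the gains, $\underline b$, $L_1$ and $L_2$. Requiring $M\succ0$ through its leading principal minors should reduce to
\[
k_p^2>2k_ik_d+(\text{perturbation}),\qquad k_d^2>k_p/\underline b+(\text{perturbation}).
\]
The perturbations are at most $(L_1+L_2)(k_p+k_d)/\underline b=\bar k$, after using Young's inequality on the $A$- and $B$-terms. Positive definiteness of $V$ itself comes from $\alpha\gamma>\beta^2$, which again uses $k_p^2>2k_ik_d$.

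Once $\dot V\le-cV$ holds, $V$ decays exponentially, so $z_1=e$ and $z_2=\dot e$ decay exponentially. This holds for every $y^*$ and every $x(0)$, since the bounds in Assumption~\ref{assum1} are global and no local argument was used.

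\textbf{Main obstacle.} Step~4 is the hard part. Because $H$ is time-varying and non-symmetric, the cross terms involving $z_0$ and $z_1$ must be paired with $v$ in such a way that only the symmetric part of $H$ appears. The free coefficients $\alpha,\beta,\gamma$ must then be tuned so that the resulting minors reproduce exactly the two inequalities defining $\Omega_{\rm pid}$ with the margin $\bar k$. The first thing to try is $\beta=k_i$, $\gamma=k_p$ and $\alpha$ proportional to $k_ik_p$, checking the determinant condition afterwards.
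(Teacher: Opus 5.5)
The paper does not prove Theorem A1. It is imported from Zhao and Guo (2022), and only its conclusion is used in the proof of Theorem~1. Your argument therefore has to stand on its own, and as written it does not: Step~4, which is the entire content of the theorem, is only asserted (``should reduce to'').

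Steps 1--2 are fine, up to a sign: with $z_0=\int_0^t e\,ds+\xi$ you need $u^*=-k_i\xi$. Your block-diagonal $V$ in $(z_0,z_1,v)$ is also the right template. Because $\partial f/\partial u$ has no upper bound, any cross term $z_j^{\mathsf T}v$ in $V$ would bring the unbounded $H$ into $\dot V$. With your $V$, however, $H$ enters $\dot V$ only through $-2k_d v^{\mathsf T}Hv\le -2k_d\underline b\|v\|^2$. So the obstacle you name is not the real one; the real difficulty is the $A,B$ terms.

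With your unit weight on $\|v\|^2$, cancelling the nominal cross terms forces $\beta=k_pk_i$, $\gamma=k_p^2-k_ik_d$ and $\alpha=k_i(2k_p^2-k_ik_d)/k_d$. Your first guess $\beta=k_i,\gamma=k_p$ leaves a $z_1^{\mathsf T}v$ term of order $k_p^2/k_d$. It already fails for $f=u$, e.g.\ at $(k_i,k_p,k_d)=(1,100,11)$. With the cancelling choice,
\[
\tfrac12\dot V\le-\tfrac{k_pk_i^2}{k_d}\|z_0\|^2-\tfrac{k_p(k_p^2-2k_ik_d)}{k_d}\|z_1\|^2-\bigl(k_d\underline b-\tfrac{k_p}{k_d}\bigr)\|v\|^2+v^{\mathsf T}Bv-k_iv^{\mathsf T}Bz_0+v^{\mathsf T}(k_dA-k_pB)z_1 .
\]
This explains the shape of $\Omega_{\rm pid}$. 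But applying Young's inequality to the last three terms yields the requirement
\[
k_d\underline b-\tfrac{k_p}{k_d}-L_2>\tfrac{k_dL_2^2}{4k_p}+\tfrac{k_d(k_dL_1+k_pL_2)^2}{4k_p(k_p^2-2k_ik_d)} .
\]
This condition is quadratic in $L_i$ divided by the margins, and it is not implied by the linear margin $\bar k$. Take $\underline b=1$, $L_2=5$, $L_1$ small and $(k_i,k_p,k_d)=(172,100,27.2)$, which lies in $\Omega_{\rm pid}$. There the left side is about $18.5$ and the right side about $28.1$. For the plant $f=5x_2+u$ this $\dot V$ is genuinely indefinite.

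So your claim that the perturbations come out as ``at most $\bar k$'' is unsupported. To close the gap, you must exhibit a concrete choice of $\alpha,\beta,\gamma$ and Young weights (generally non-cancelling and gain-dependent), or a different Lyapunov function. You then have to verify that $K\in\Omega_{\rm pid}$ implies $M\succ 0$. Until then the proposal is a plan rather than a proof.
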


Building upon Theorem A1, it is natural to consider the following constrained optimization problem with guaranteed closed-loop stability:
\begin{subequations} \label{op} 
\begin{align}
     \min_{K\in \Omega_{{\rm pid}}} ~ J(K)&=q(e(t_f))+\textstyle\int_0^{t_f} l(e(t),u(t))  \mathrm{d}t, \label{opa}
     \\
    \text{s.t.}~\dot x_1(t) &= x_2(t) \label{opb}\\\ 
    \dot x_2(t) &= f(x_1(t),x_2(t),u(t))\label{opc}\\
    x(0)&=x_0 \label{opd}\\
    u(t)&=k_0\textstyle\int_0^te(s)\mathrm{d}s+k_1e(t)+k_2\dot e(t) \label{ope}\\
    e(t)&=y^*-x_1(t) \label{opf}
\end{align}
\end{subequations}
In the constrained optimization problem above, we have restricted the PID parameters to be selected from the stability region 
$\Omega_{{\rm pid}}$. This is a quite reasonable assumption, as stability is, of course, one of the most fundamental requirements for a control system. By restricting the PID parameters to this region, one can effectively guarantee the stability of the closed-loop system, ensuring that the system can operate reliably under various conditions.

\emph{Information availability:} In our setting, the optimization solver is allowed to query the value of $J(K)$ for a given PID control parameter $K$, that is, it has access solely to input-output (zeroth-order) information of $J(\cdot)$.
However, the first-order (gradient) or second-order (Hessian) derivatives cannot be queried, since the nonlinear function $f(\cdot)$ in the dynamical system \eqref{opb} is uncertain, and  $q(\cdot),~l(\cdot)$ may also be unknown.

\textbf{Remark 1.}
We emphasize that the problem \eqref{op} is challenging due to two major obstacles. First, the objective function $J$ is \emph{unknown}, which makes classical optimization methods that rely on gradient or Hessian information inapplicable, and stochastic approximation like algorithms need to be used.
Second, it should be noted that the objective function $J$ may be highly nonlinear and non-convex, as we have only assumed that $l$ and $q$ in \eqref{opa}
are nonnegative continuous functions. 
In fact, even when dynamic system \eqref{opb}-\eqref{ope} is linear and, both $l$ and $q$ are quadratic forms, the convexity of $J$ may still not be guaranteed (see Example 1). Consequently, traditional gradient-based optimization methods risk getting trapped in local minima or saddle points, and so certain global optimization techniques need to be considered.
\vskip 0.2cm
\textbf{Example 1. }[\emph{Nonconvexity of the Performance Cost}] Consider the optimization problem \eqref{op}, where $x_1,x_2$, $u$ in \eqref{opc} are all scalars, the function $f$ is given by $f(x_1,x_2,u)=ax_1+bx_2+u$, and the performance cost is $$J(K)=\textstyle\int_0^{t_f} \left(\lambda|e(t)|^2+|u(t)|^2\right) \mathrm{d}t.$$ 
For such a linear system under PID control, it is straightforward to deduce (using the Routh-Hurwitz criterion) that the closed-loop system is globally stable if and only if $k_1>a,k_0>0,(k_1-a)(k_2-b)>k_0$.

We now turn our attention to the cost function $J(K)$, which can be expressed (in this example) as follows:
\begin{align}\label{example-J}
\begin{split}
    J= z^{\mathsf{T}}_0Qz_0-z^{\mathsf{T}}_{1}Qz_{1}+z_*^{\mathsf{T}} P\left[ t_f z_*+2A_c^{-1}(z_{1}-z_0)\right]
\end{split}
\end{align}
where $P=KK^\mathsf{T}+\lambda \text{ diag}\{0,1,0\}$, $ z_{1}=e^{t_fA_c}z_0$ with
\begin{align}\label{Ac}
A_c=\begin{bmatrix}
        0&1&0\\
        0&0&1\\
        0&a&b
    \end{bmatrix}-\begin{bmatrix}0\\0\\1\end{bmatrix}K^{\mathsf{T}},~~z_0=\begin{bmatrix}
    \textstyle0\\e(0)\\\dot e (0)
\end{bmatrix}-\underbrace{A_c^{-1}\begin{bmatrix}
    0\\0\\ay^*
\end{bmatrix}}_{\text{denoted as } z_*}
\end{align}
and $Q$ solves the Lyapunov equation $A_c^\mathsf{T} Q+QA_c+P=0$. 
It can be observed from (\ref{example-J}) that the cost function $J$ is highly nonlinear, and its gradient with respect to 
$K$ is quite complicated, even in the absence of system uncertainties. 
Besides, it can be verified that $J$ may not be a convex function of $K$. The detailed derivations of equation (\ref{example-J}) and the analysis of the nonconvexity of $J$ are provided in the Appendix A.
\vspace{-5pt}
\section{Near-Optimal PID Tuning}
\vspace{-8pt}
Note that the PID parameter set \eqref{a1-pid} is unbounded (its Lebesgue measure is infinite), which poses a challenge for random search methods, as they typically require a compact domain in order to define a uniform sampling distribution.
To address this issue, we introduce the following feasible set:
\begin{align}\label{feasible set}
\begin{split}
S = \big\{K \in \mathbb{R}_{+}^3 \left|\right. \|K\|\le R,~&k_p^2\geq 2k_ik_d+\bar k+r,\\
		~&k_d^2~\!\geq k_p/\underline b~\!+\bar k+r\big\}    
\end{split}
\end{align}
where $r$ and $R$ are two positive constants. 
It is clear that $S$ is a compact set and $S \subseteq \Omega_{{\rm pid}}$. 
Besides, for any $K\in \Omega_{{\rm pid}}$, it holds that $K\in S$ as long as $r$ is suitably small and $R$ is suitably large. Therefore, the feasible set \eqref{feasible set} is a compact inner approximation of $\Omega_{{\rm pid}}$.

Since the exact gradient of the objective function \eqref{cost0} is unavailable, we estimate it using the KW algorithm, which approximates the gradient through finite differences.
Specifically, the gradient estimate of $J$ at $K$ in the KW algorithm is defined as follows:
\begin{align}\label{tiduguji}
    &\widehat{\frac{\partial J}{\partial K}}=
    \frac{1}{2c}\begin{bmatrix}
    J(K+c e_0)-J(K-c e_0)\\
    J(K+c e_1)-J(K-c e_1)\\
    J(K+c e_2)-J(K-c e_2)
    \end{bmatrix}
\end{align}
where $c>0$ is a perturbation constant and
\begin{align*}
e_0=\begin{bmatrix}
        1&0&0
    \end{bmatrix}^\mathsf{T},~e_1=\begin{bmatrix}
        0&1&0
    \end{bmatrix}^\mathsf{T},~e_2=\begin{bmatrix}
        0&0&1
    \end{bmatrix}^\mathsf{T}.
\end{align*}    

\textbf{HRS-KW Algorithm: } Let us introduce an iterative learning algorithm, referred to HRS-KW, which integrates hysteretic random search with the KW method and is recursively defined by 
\begin{equation}\label{RSKW}
    K_{i}=
    \begin{cases}
        \eta_{i},&\text{if }J(\eta_{i})\leq J(K_{i-1})-\tfrac{\epsilon }{2},~~J(\eta_{i})\leq J(w_{i})\\
        w_{i},&\text{if }J(w_{i})\leq J(K_{i-1})-\tfrac{\epsilon }{2},~J(w_{i})<J(\eta_{i})\\
        K_{i-1},&\text{otherwise}
    \end{cases}
\end{equation}
where $\epsilon>0$ is a given descent threshold, the initial $K_0$ is chosen from $\Omega_{\mathrm{pid}}$, $\{\eta_i\}$ is an independent and identically distributed random sequences, uniformly distributed over $S$, and 
\begin{align} \label{touying}
w_{i}=\mathlarger{\Pi}_S\big\{K_{i-1} - \alpha G_{i-1}\big\}
\end{align}
where $\alpha$ is the learning rate, $G_{i-1}=\widehat{\frac{\partial J}{\partial K}}\big|_{K_{i-1}}$ is the  estimate of the gradient of $J$ at $K_{i-1}$,  and $\Pi_S\{\cdot \}$ is the projection operator onto $S$.

\vspace{1em}
\begin{theorem}\label{th1}
Consider the constrained optimization problem \eqref{op}, where the nonlinear uncertain function $f$ satisfies Assumption \ref{assum1}.
If the HRS-KW algorithm \eqref{RSKW} is applied  with any initial value $K_0\in \Omega_{\mathrm{pid}}$,
then for any initial state $x_0$ and any setpoint $y^*$,
we have $
\lim_{t\rightarrow \infty}K_t=K_{\infty},~ a.s.$,
where $K_{\infty}$ belongs to the set 
\[
S(\epsilon):=\{K\in S\left| J(K)\leq J^*+\epsilon \right.\},
\]
and $J^*:=\min_{K\in S}J(K)$ denotes the optimal value over the feasible set $S$.
\end{theorem}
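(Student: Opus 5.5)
The proof is a hysteretic random-search argument: strict descent by at least $\epsilon/2$ can happen only finitely often, and uniform sampling ensures that $S(\epsilon)$ is hit with positive probability at every step.

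\textbf{Step 1: $J$ is continuous and bounded on $S$.}
For every $K\in\Omega_{\rm pid}$ the closed-loop system \eqref{opb}--\eqref{opf} has a unique solution on $[0,t_f]$. This follows because $f\in C^1$ with bounded Jacobians in $x_1,x_2$, so the augmented state $(\int e,\,e,\,\dot e)$ obeys an ODE that is locally Lipschitz in both state and parameter $K$; here $u$ enters through $\dot e=-x_2$, so $u$ is an explicit function of the state. Continuous dependence on parameters over the compact interval $[0,t_f]$ makes $K\mapsto (e(\cdot),u(\cdot))$ continuous in sup norm. Since $l,q$ are continuous, $J$ is continuous on $\Omega_{\rm pid}$. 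As $S$ is compact, $J^*=\min_S J$ is attained and $J$ is bounded on $S$.

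\textbf{Step 2: monotonicity and finitely many moves.}
By \eqref{RSKW}, $J(K_i)\le J(K_{i-1})$ for all $i$. Moreover, whenever $K_i\neq K_{i-1}$, we have $J(K_i)\le J(K_{i-1})-\epsilon/2$. Every new iterate is either $\eta_i\in S$ or $w_i\in S$, the latter by the projection \eqref{touying}; also $J\ge0$. Hence the number of changes is at most
\[
1+2\big(J(K_0)-0\big)/\epsilon<\infty,
\]
which holds surely, not just almost surely. Therefore $K_i$ is eventually constant, so $K_\infty:=\lim K_i$ exists. One caveat: $K_0$ may lie outside $S$, but $J(K_0)<\infty$ by Step 1, so the bound is still finite. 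Also $K_\infty\in S$ unless no move ever occurs, and that case is excluded below.

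\textbf{Step 3: the limit lies in $S(\epsilon)$ a.s.}
By continuity and attainment of $J^*$, the set
\[
U=\{K\in S: J(K)<J^*+\epsilon/2\}
\]
is relatively open in $S$ and nonempty. We need $p:=\mu(U)/\mu(S)>0$, where $\mu$ is Lebesgue measure. This is the main technical point: $S$ must have nonempty interior, and every relatively open nonempty subset of $S$ must have positive measure. I would argue this from the structure of $S$. It is defined by finitely many strict-able polynomial inequalities, and for the constraints to be meaningful $S$ must be nonempty, so $r$ is small and $R$ large. Near any minimizer one can then find interior points: perturbing in the direction of increasing $k_d$ and then $k_p$ slackens the inequalities, which shows $S$ is the closure of its interior in a neighborhood of the minimizer. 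If this regularity is awkward in general, I would add it as a standing condition ($S=\overline{\mathrm{int}\,S}$).

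With $p>0$ in hand, consider the event that $K_i\notin S(\epsilon)$ at some step. Suppose $J(K_{i-1})>J^*+\epsilon$ and $\eta_i\in U$. Then
\[
J(\eta_i)<J^*+\epsilon/2<J(K_{i-1})-\epsilon/2,
\]
so a move occurs (to $\eta_i$ or to a better $w_i$), and the new value is $<J^*+\epsilon/2$. Because the $\eta_i$ are i.i.d., the conditional probability of $\eta_i\in U$ given the past equals $p$. By Borel--Cantelli (or simply $(1-p)^m\to0$), almost surely some $\eta_i$ falls in $U$ infinitely often. If $J(K_\infty)>J^*+\epsilon$, the iterate would be stuck at $K_\infty$ forever, yet a hit of $U$ would force a move, which is a contradiction. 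Hence $J(K_\infty)\le J^*+\epsilon$ a.s. Also $K_\infty\in S$: if $K_0\notin S(\epsilon)$ the same argument forces at least one move into $S$, and if $K_0\in S(\epsilon)\subset S$ the claim is trivial.

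The main obstacle is Step 1 together with the positive-measure claim in Step 3. Continuity of $J$ needs well-posedness of the PID loop, and the implicit dependence of $u$ on $\dot e$ must be handled via $\dot e=-x_2$. The KW term $w_i$ plays no role in the convergence proof beyond staying in $S$.
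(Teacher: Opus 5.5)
Your route is the paper's: continuity of $J$ via the augmented autonomous system (using $\dot e=-x_2$), surely finitely many $\epsilon/2$-descents so that $K_i$ is eventually constant, and a uniform sample landing in a near-optimal region with positive probability. Your ending argues by contradiction from infinitely many hits of $U$ after the last move. The paper instead uses the first hitting time $\tau$ of $B(K^*,\delta)\cap S$, a two-case bound $J(K_\tau)\le J^*+\epsilon$, and monotonicity; the two endings are equivalent.

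Three steps need repair. First, local Lipschitzness does not give a solution on all of $[0,t_f]$. Assumption 1 bounds $\partial f/\partial u$ only from below, so $f$ may grow superlinearly in $u$ and finite escape time is not excluded a priori. You need Theorem A1, as the paper uses it, to get a global solution for each $K\in\Omega_{\mathrm{pid}}$. Continuous dependence on $K$ then only requires existence at the nominal parameter.

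Second, your patch for $K_0\notin S$ fails. Take $K_0\in\Omega_{\mathrm{pid}}\setminus S$ with $J(K_0)<J^*+\epsilon/2$. Every candidate $\eta_i$, $w_i$ lies in $S$ and so has $J\ge J^*>J(K_0)-\epsilon/2$. No move ever occurs, and $K_\infty=K_0\notin S$. Your forcing argument only applies while $J(K_{i-1})>J^*+\epsilon$. In this case the claim $K_\infty\in S(\epsilon)$ is simply false; the paper's assertion $K_\infty=K_{i_0}\in S$ has the same hole when $I_0=\emptyset$. What your argument actually proves is $J(K_\infty)\le J^*+\epsilon$ a.s., with $K_\infty\in S$ whenever $K_0\in S$ or at least one move occurs.

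Third, your heuristic for $\mu(U)>0$ is wrong as stated. Increasing $k_d$ tightens $k_p^2\ge 2k_ik_d+\bar k+r$ and the norm bound. Increasing $k_p$ tightens $k_d^2\ge k_p/\underline b+\bar k+r$ and the norm bound. So it is really your fallback hypothesis $S=\overline{\mathrm{int}\,S}$ that carries this step; it also gives $\mu(S)>0$, which is needed for $\eta_i$ to be uniform on $S$ at all. The paper's ``boundary composed of smooth surfaces'' is likewise only asserted, so flagging it explicitly is appropriate.
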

\begin{proof} 
	We begin by demonstrating that the objective function $J(K)$ is well-defined and continuous on the set $S$. 
To this end, we introduce the auxiliary state $x_0(t)=\int_0^t(y^*-x_1(s))\mathrm{d}s$. Then the PID control \eqref{pid} can be expressed (in terms of the augmented state vector $\bar x:=\begin{bmatrix}
    x_0&x_1&x_2
\end{bmatrix}^\mathsf{T}$) as $$u(t)=k_0x_0(t)-k_1x_1(t)-k_2x_2(t)+k_1y^*$$ and the closed-loop system \eqref{system0}-\eqref{pid} can be rewritten as the autonomous differential equation 
\begin{align}\label{zizhi}
    \dot{\bar{x}}  = F(\bar x, K), ~\bar x(0)=\begin{bmatrix}
        0&x^\mathsf{T}(0)
    \end{bmatrix}^\mathsf{T}
\end{align}
where the vector field $F(\bar x, K)$ is given by
\begin{align*}
    F(\bar x,~K)=\begin{bmatrix}
        y^* - x_1\\x_2\\f(x_1,x_2,k_0x_0-k_1x_1-k_2x_2+k_1y^*)
    \end{bmatrix}.
\end{align*} 
It is easy to see that $F(\bar x,K)$ is continuously differentiable in $(\bar x,K)$, due to the fact $f\in C^1$. Denote the solution of \eqref{zizhi} as $\bar x(t,K)$, then it is a continuous function of $K$.
Therefore, both $e(t)=y^*-x_1(t)$ and $u(t)=k_0x_0(t)-k_1x_1(t)-k_2x_2(t)+k_1y^*$ depend continuously on $K$.
Besides, since $K\in S\subset \Omega_{{\rm pid}}$, Theorem A1 tells us $e(t)$ converges to zero and $u(t)$ is a bounded function on $[0,~\infty)$. Hence $J(K)=q(e(t_f))+\int_0^{t_f} l(e(t),u(t)) \mathrm{d}t$ is finite for any $t_f>0$.
Note also that $l$ and $q$ are continuous,  we conclude that $J(K)$ is a continuous function of $K$ over $S$. 

We next prove that both $\{K_i\}$ and $\{J(K_i)\}$  converges.

By the update law of HRS-KW algorithm \eqref{RSKW}, we know that $\{J(K_i)\}$ is a decreasing and nonnegative sequence. Consequently, $\{J(K_i)\}$  converges to some  nonnegative limit  $J_{\infty}$.
To show that $\{K_i\}$ converges, we first denote $$I_0=\{i\geq0: K_{i+1}\neq K_i\}.$$
According to \eqref{RSKW}, for any $i\in I_0$, $J(K_{i+1})\leq J(K_i)-\epsilon /2$. 
Combine this with the fact that $J(K_i)$ is decreasing implies that $|I_0|\leq 2(J(K_i)-J^*)/ \epsilon$, where $|I_0|$ denotes the cardinality (i.e., the number of elements) of the set $I_0$. 

Let $i_0$ be the largest element in $I_0$. By the definition of $I_0$, we know that $K_i \equiv K_{i_0}$ for all $i\geq i_0$.
Therefore, $\{K_i\}$ converges to $K_{i_0}$ in a finite number of steps, with the limit $K_\infty=K_{i_0}\in S$.

We end the proof of Theorem \ref{th1} by showing that, with the assistance of random search, the limit $K_\infty$ (recall it is $K_{i_0}$) satisfies the inequality $J(K_\infty)\leq J^* +\epsilon$. 
 
Note that the feasible set $S$ is compact, there exists some $K^*\in S$, such that $J(K^*)=\min_{K\in S}J(K)$.
Since $J$ is continuous in $K$, there exists $\delta>0$ such that for all $K\in B(K^*,~\delta) \cap S$, we have $0\le J(K)-J^*\leq \epsilon /2$. Furthermore, note that the boundary of the $S$ is composed of smooth surfaces, so the Lebesgue measure of $B(K^*,~\delta) \cap S$ is positive.

Define a sequence of random event $D_i=\{\eta_i \in B(K^*,~\delta) \cap S\}$, $i\geq 1$. Note the $\{\eta_i\}$ is an i.i.d. random sample, then we have 
\begin{align*}
P\left(\bigcap_{i=1}^{N}D_i^c\right)&=\prod_{i=1}^NP(D_i^c) =(P(D_1^c))^N\\
    &=\left[1-\frac{\mu(B(K^*,~\delta) \cap S)}{\mu(S)}\right]^N\rightarrow 0 ~(N \rightarrow \infty)
\end{align*}
where $\mu(\cdot)$ denotes the 3-dimensional Lebesgue measure.
Hence, $P\left(\bigcap_{i=1}^{\infty}D_i^c\right)=0$, which yields  
\begin{align}\label{p1}
P\left(\bigcup_{i=1}^{\infty} D_i\right)=P\big\{\exists i\geq 1: \eta_i \in B(K^*,~\delta)\cap S\big\}=1.\end{align}
Define the first hitting time $$\tau=\min\{t\geq 1:~\eta_{t}\in B(K^*,~\delta) \cap S \}.$$ Then it follows from (\ref{p1}) that $\tau < \infty$ a.s.. 

In the following, we show that $J(K_\tau)\le J^*+\epsilon$ by considering two cases. 
By the definitions of $\tau$ and $B(K^*,\delta)$, it is easy to see that $J(\eta_\tau)\le J^*+\epsilon/2.$ 

Case 1:  $\min\{J(\eta_{\tau}),J(w_\tau)\}\leq J(K_{\tau-1})-\epsilon /2$, then by the update law \eqref{RSKW}, we know that  $J(K_{\tau})\leq J(\eta_{\tau}) \leq J^*+\epsilon /2$.

Case 2: Both $J(\eta_{\tau})> J(K_{\tau-1})-\epsilon /2$ and $J(w_{\tau})> J(K_{\tau-1})-\epsilon /2$. By the update law \eqref{RSKW},  then  $J(K_{\tau})=J(K_{\tau-1})<J(\eta_{\tau} )+ \epsilon /2  \le  J^* + \epsilon$. 

Both two cases imply that $J(K_\tau)\le J^*+\epsilon$. Recall that the sequence $\{J(K_i)\}$ is decreasing, we deduce that
$$J(K_\infty)= \lim_{i\to\infty}J(K_i)\le J(K_\tau)\le J^*+\epsilon.$$ Thus,  $K_\infty$ belongs to the set $\{K\in S\left| J(K)\leq J^*+\epsilon \right.\}$.
\end{proof}
\textbf{Remark 2.} Theorem \ref{th1} shows that the HRS-KW algorithm converges almost surely to the $\epsilon$-optimal set for any initial $K_0\in \Omega_{\mathrm{pid}}$, even in the presence of large-scale uncertainty of function $f$ in \eqref{opc}. From the proof, it can be seen that the global convergence (i.e.,  $K_0\in \Omega_{\mathrm{pid}}$ can be chosen arbitrarily) is largely attributed to the exploration capability of its hysteretic random search steps, which, unlike deterministic gradient-based methods, allow the algorithm to avoid getting stuck in local minima and saddle points. 
Finally, we point out that the descent threshold $\epsilon$ in the algorithm affects both the convergence rate of sequence $\{K_i\}$ and the ultimate cost value $J(K_{\infty})$, thereby making its choice a critical factor for the optimization efficiency and solution quality, and a topic worthy of further investigation (\cite{wxq}).

We emphasize that Theorem A1, regarding the global stability of the PID control system, is only a sufficient condition. A more critical and challenging theoretical  problem lies in determining a larger and necessary stability region for the PID parameters. Solving this issue would have meaningful theoretical and practical implications for two reasons: 1) it ensures that the optimal PID parameters are not ``overlooked" during the search for the best controller settings; 2) smaller control gains often offer advantages in addressing issues such as controller saturation and the amplification of measurement noise.

It is known that, for a specific class of SISO nonlinear systems, a sufficient and necessary condition for the PID parameter selection is provided in \cite{2017zc} (See Proposition 1). Specifically, for the following SISO nonlinear uncertain system:
\begin{align}\label{system-2017}
\left\{
\begin{aligned}
    &\dot x_1(t) = x_2(t)\\
    &\dot x_2(t) = f(x_1(t),x_2(t))+u(t)\\
    &y(t) = x_1(t).
\end{aligned}
\right.
\end{align} 
where $x_1,x_2$ and $u$ are all scalars, and $f$ is a nonlinear uncertain function.
Suppose function $f$ belongs to the following function space $\mathcal{G}_{L_1,L_2}$, which is defined by  
\begin{align*}
\Big \{f\in C^{2}~\!\Big|~\!\frac{\partial{f}}{\partial{x_1}}\le L_1,\frac{\partial{f}}{\partial{x_2}}\le L_2,\frac{\partial^2f}{\partial x_2^2}=0,~x\in\mathbb{R}^2\Big\},
\end{align*}
where $L_1>0,~L_2>0$ are constants and   $C^{2}(\mathbb{R}^2)$ is the space of  twice continuously  differentiable functions.
\begin{atheorem}
For any setpoint $y^*\in\mathbb{R}$ and any $f\in \mathcal{G}_{L_1,L_2}$, 
the closed-loop system \eqref{system-2017}-\eqref{pid} satisfies $\lim_{t\rightarrow \infty} x_1(t)=y^*,~\lim_{t\rightarrow \infty} x_2(t)=0$ for all initial state $x(0)\in\mathbb{R}^2$
if and only if the PID parameter $K$ belongs to the set
\begin{align}
    \Omega=\left\{K\left |~k_0>0,~\bar k_1>0,~\bar k_1\bar k_2>k_0\right.\right\},
\end{align}
where $\bar k_1=k_1-L_1$, $\bar k_2=k_2-L_2$.
\end{atheorem}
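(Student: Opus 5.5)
The statement has two directions. I would prove \emph{necessity} with a linear test function in the class $\mathcal{G}_{L_1,L_2}$, and \emph{sufficiency} with a Lyapunov argument that exploits the structure forced by $\partial^2 f/\partial x_2^2=0$.

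\emph{Necessity.} Take $f(x_1,x_2)=L_1x_1+L_2x_2$, which belongs to $\mathcal{G}_{L_1,L_2}$. With the augmented state $(\int_0^t e\,\mathrm{d}s,\,e,\,\dot e)$, the closed loop \eqref{system-2017}-\eqref{pid} is linear and time-invariant, up to a constant forcing term $-L_1y^*$. Its characteristic polynomial is
\[
s^3+\bar k_2 s^2+\bar k_1 s+k_0 .
\]
Convergence of $x_1(t)\to y^*$ and $x_2(t)\to0$ for \emph{every} initial state means that every solution of the homogeneous part tends to zero. The homogeneous part is a linear ODE, so this holds only if all characteristic roots lie in the open left half-plane; a root on the imaginary axis or in the right half-plane produces a non-decaying mode from a suitable initial condition. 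By the Routh--Hurwitz criterion this is exactly $k_0>0$, $\bar k_1>0$, $\bar k_1\bar k_2>k_0$, that is, $K\in\Omega$.

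\emph{Sufficiency.} Since $\partial^2 f/\partial x_2^2=0$, we can write $f(x_1,x_2)=g(x_1)+h(x_1)x_2$ with $h(x_1)\le L_2$, and $g'(x_1)+h'(x_1)x_2\le L_1$ for all $x$. The steps are:
\begin{itemize}
\item[1.] Locate the equilibrium: $x_1=y^*$, $x_2=0$, and an integral state $\sigma^*$ with $k_0\sigma^*=-g(y^*)$. Shift coordinates so that $z_0=\int_0^t e\,\mathrm{d}s-\sigma^*$, $z_1=e$, $z_2=\dot e$.
\item[2.] Write the closed loop as
\[
\dot z_0=z_1,\qquad \dot z_1=z_2,\qquad \dot z_2=-k_0z_0-\big(k_1-a(t)\big)z_1-\big(k_2-b(t)\big)z_2 .
\]
Here $a(t)$ is a mean-value coefficient of $g+h\,x_2$ in $x_1$, and $b(t)=h(x_1(t))$; the bounds give $a\le L_1$ and $b\le L_2$. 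The effective gains $k_1-a\ge\bar k_1>0$ and $k_2-b\ge\bar k_2$ are therefore bounded below but \emph{not} above.
\item[3.] Build a Lyapunov function adapted to this one-sided information. I would use a Li\'enard/Aizerman-type function: a quadratic form in $(z_0,z_1,z_2)$ whose cross terms are weighted by constants $\lambda\in(k_0/\bar k_2,\bar k_1)$, which is possible exactly because $\bar k_1\bar k_2>k_0$. To it I would add the integral terms $\int_0^{z_1}(g(y^*+\cdot)\text{-type})$ and $\int h\,z_2\,\mathrm{d}z_1$, so that the large, unbounded parts of $a$ and $b$ enter $\dot V$ with a favorable (nonpositive) sign instead of as perturbations.
\item[4.] Verify that $V$ is radially unbounded and that $\dot V\le -c\,(z_1^2+z_2^2)$ or a similar negative semidefinite bound. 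Conclude with LaSalle's invariance principle (or Barbalat's lemma) that $z_1,z_2\to0$, hence $x_1\to y^*$ and $x_2\to0$.
\end{itemize}

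\emph{Expected main obstacle.} The hardest part is the construction in step 3 for a \emph{time-varying} coefficient pair $(a(t),b(t))$ that is bounded only from above. A plain quadratic $V$ fails, because $b(t)$ can vary quickly and arbitrarily negatively. The integral terms must be chosen so that the derivative of $h(x_1)$ along trajectories, which produces $h'(x_1)x_2$, combines with $g'$ into exactly the quantity bounded by $L_1$. This is where the affine-in-$x_2$ structure is essential, and I would first test the construction on the case $h$ constant.
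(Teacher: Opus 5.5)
The paper only quotes this result from Zhao and Guo (2017) and gives no proof of it, so I am judging your plan on its own merits.

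\textbf{Sufficiency: a genuine gap.} The Lyapunov function is the entire content of this direction, and you never construct it. The obstacle you single out also does not exist. You wrote down the constraint $g'(x_1)+h'(x_1)x_2\le L_1$ for all $(x_1,x_2)$. For fixed $x_1$ the left side is affine in $x_2\in\mathbb{R}$, so it is bounded above only if $h'(x_1)=0$. Hence $h\equiv h_0\le L_2$ is constant, and every $f\in\mathcal{G}_{L_1,L_2}$ has the form $g(x_1)+h_0x_2$ with $g'\le L_1$. There is no time-varying $b(t)$: the ``special case'' you planned to test first is the general case.

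Your step 2 is also the wrong frame. Once $a(t)$ is treated as an exogenous coefficient known only to satisfy $k_1-a(t)\ge\bar k_1$, nothing controls $\dot a(t)$, and one-sided frozen-time bounds do not give stability. You must keep the nonlinearity as a fixed function of $e$, so that its integral is a state function.

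Concretely, set $z_0=\int_0^t e\,\mathrm{d}s+g(y^*)/k_0$, $z_1=e$, $z_2=\dot e$. Let $\phi(s)=k_1s+g(y^*-s)-g(y^*)$, so that $\phi(0)=0$ and $\phi'\ge\bar k_1$, and let $\beta=k_2-h_0\ge\bar k_2$. The closed loop is then the autonomous Lur'e system $\dot z_0=z_1$, $\dot z_1=z_2$, $\dot z_2=-k_0z_0-\phi(z_1)-\beta z_2$. Take $\Phi(s)=\int_0^s\phi(r)\,\mathrm{d}r$ and
\[
V=\tfrac12(z_2+\beta z_1)^2+\Phi(z_1)+k_0z_0z_1+\tfrac{\beta k_0}{2}z_0^2\ \ge\ \tfrac12(z_2+\beta z_1)^2+\tfrac{\beta k_0}{2}\big(z_0+z_1/\beta\big)^2+\tfrac12\big(\bar k_1-k_0/\beta\big)z_1^2 .
\]
A direct computation gives $\dot V=-z_1\big(\beta\phi(z_1)-k_0z_1\big)\le-(\beta\bar k_1-k_0)z_1^2$. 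Both estimates use exactly $\beta\bar k_1\ge\bar k_1\bar k_2>k_0$, which is your $\lambda$-window. $V$ is radially unbounded, and LaSalle on $\{z_1=0\}$ forces $z_2\equiv0$ and then $z_0\equiv0$. Your instinct (a Lur'e integral term plus cross terms calibrated by $\bar k_1\bar k_2>k_0$) points the right way, but as written this direction is unproved.

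\textbf{Necessity: right idea, one overstatement.} The linear plant $L_1x_1+L_2x_2$ plus Routh--Hurwitz is the correct approach. However, the sentence ``every solution of the homogeneous part tends to zero'' overstates what the hypothesis gives. Only $e$ and $\dot e$ must converge, not the integral state $\sigma=\int_0^t e\,\mathrm{d}s$, and $\sigma(0)=0$ is fixed.

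A simple root at $s=0$, i.e.\ $k_0=0$, only produces a constant $\sigma$. Indeed, for $y^*=0$ the gains $k_0=0$, $\bar k_1,\bar k_2>0$ give $e,\dot e\to0$ from every $x(0)$. To exclude $k_0=0$ you must use the forcing term. For $y^*\neq0$ the error obeys $\ddot e+\bar k_2\dot e+\bar k_1e=-L_1y^*$. Then $e,\dot e\to0$ would force $\ddot e\to-L_1y^*\neq0$, which is impossible since $L_1>0$. Alternatively, add a constant to $f$, which keeps it in $\mathcal{G}_{L_1,L_2}$.

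For $k_0\neq0$, every bad root $\lambda$ is nonzero. The admissible initial set $\{\sigma=0\}$ is not invariant, because $\dot\sigma=e$. So it is not contained in the stable subspace, and the mode $e^{\lambda t}$ appears in $e=\dot\sigma$ with a non-decaying factor.
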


Based on Theorem A2, and 
 defining the feasible set 
$$\Omega_{r,R}=\left\{K\left | ~\|K\|\le R,~ k_0\geq r,~\bar k_1\geq r,~\bar k_1\bar k_2-k_0\geq r\right.\right\},$$
we now present the following result, the proof of which is analogous to that of Theorem \ref{th1}.
\vskip 0.1cm
\begin{theorem}\label{th2}
Consider the dynamic optimization problem
\eqref{op}, where the nonlinear uncertain function $f\in\mathcal{G}_{L_1,L_2}$ and the feasible set is $\Omega_{r,R}$.
If the HRS-KW algorithm \eqref{RSKW} is applied with any $K_0\in \Omega$,
then for any initial state and any setpoint,
we have $$\lim_{t\rightarrow \infty}K_t=K_{\infty}\in \{K\in \Omega_{r,R}\left| J(K)\leq J^*+\epsilon \right.\},$$
where $J^*:=\min_{K\in \Omega_{r,R}}J(K)$ is the optimal value.
\end{theorem}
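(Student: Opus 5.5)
The proof will follow the three-step structure of the proof of Theorem \ref{th1}, with Theorem A2 playing the role of Theorem A1 and $\Omega_{r,R}$ playing the role of $S$.

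\textbf{Step 1: $J$ is well defined and continuous on $\Omega_{r,R}$.} First note that $\Omega_{r,R}\subset\Omega$, because $k_0\ge r>0$, $\bar k_1\ge r>0$ and $\bar k_1\bar k_2-k_0\ge r>0$. Also, $\Omega_{r,R}$ is compact: it is the intersection of the closed ball $\|K\|\le R$ with sublevel sets of continuous functions.

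Introduce the augmented state $\bar x=[x_0,x_1,x_2]^\mathsf{T}$ with $x_0(t)=\int_0^t(y^*-x_1)\,\mathrm{d}s$. The closed loop \eqref{system-2017}-\eqref{pid} then becomes an autonomous ODE $\dot{\bar x}=F(\bar x,K)$. Since $f\in C^2$, the vector field $F$ is $C^1$ in $(\bar x,K)$.

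Theorem A2 guarantees that for every $K\in\Omega$ the solution exists globally, with $x_1\to y^*$ and $x_2\to 0$. In particular there is no finite-time escape on $[0,t_f]$. Standard continuous dependence of ODE solutions on parameters over the compact interval $[0,t_f]$ then gives that $e(\cdot)$ and $u(\cdot)$ depend continuously on $K$, uniformly on $[0,t_f]$. Since $l$ and $q$ are continuous, $J$ is finite and continuous on $\Omega_{r,R}$, and so $J^*=\min_{\Omega_{r,R}}J$ is attained at some $K^*$.

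One point needs care. The initial value $K_0\in\Omega$ may lie outside $\Omega_{r,R}$, and the argument must still cover it. Theorem A2 still makes $J(K_0)$ finite. Every later candidate is either $\eta_i$, uniform on $\Omega_{r,R}$, or $w_i=\Pi_{\Omega_{r,R}}\{\cdot\}$, so all of them lie in $\Omega_{r,R}$. Thus every accepted $K_i$ with $i\ge1$ that differs from $K_0$ lies in $\Omega_{r,R}$.

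The projection $\Pi_{\Omega_{r,R}}$ also needs attention. $\Omega_{r,R}$ is not obviously convex, so the projection should be read as any measurable nearest-point selection. Such a selection exists by compactness, and only its membership in $\Omega_{r,R}$ is used. The evaluations $J(K\pm ce_j)$ inside the KW gradient estimate affect only the value of $w_i$, not the convergence argument. If needed, one can assume $c$ is small enough that these points stay in $\Omega$, where $J$ is finite.

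\textbf{Step 2: $\{K_i\}$ converges in finitely many steps.} By the update rule \eqref{RSKW}, $J(K_i)$ is nonincreasing and nonnegative. Every change of $K_i$ lowers $J$ by at least $\epsilon/2$. Hence the number of changes is at most $2J(K_0)/\epsilon<\infty$, and $K_i$ is eventually constant, equal to some $K_\infty$.

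If $K_\infty\neq K_0$, then $K_\infty\in\Omega_{r,R}$ by the remark in Step 1. The case $K_\infty=K_0$ is excluded in Step 3 whenever $K_0\notin\Omega_{r,R}$: there it is shown that $J(K_\infty)\le J^*+\epsilon$, and the acceptance rule then forces a move into $\Omega_{r,R}$ unless $J(K_0)$ is already within $\epsilon/2$ of the sampled good value.

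\textbf{Step 3 (main obstacle): random search hits a neighbourhood of $K^*$.} This step needs $\mu(B(K^*,\delta)\cap\Omega_{r,R})>0$ for every $\delta>0$. Here $\Omega_{r,R}$ is cut out by the smooth constraints $k_0\ge r$, $k_1-L_1\ge r$, $(k_1-L_1)(k_2-L_2)-k_0\ge r$ and $\|K\|\le R$. Its interior is nonempty for $R$ large, and it equals the closure of that interior: at any boundary point one can increase $k_2$ slightly, which increases $\bar k_1\bar k_2$ and moves into the strict interior, after shrinking slightly if the ball constraint is active. I would verify this density directly from the constraint functions, using the monotonicity of the constraints in $k_2$ and $k_1$.

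With positive measure established, pick $\delta$ from uniform continuity so that $J\le J^*+\epsilon/2$ on $B(K^*,\delta)\cap\Omega_{r,R}$. The i.i.d. sampling gives $P(\eta_i\notin B\ \forall i)=\lim_N(1-p)^N=0$, where $p>0$ is the probability that $\eta_i$ lands in $B(K^*,\delta)\cap\Omega_{r,R}$. So the hitting time $\tau$ is finite almost surely.

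At time $\tau$, the same two cases as in Theorem \ref{th1} apply. Either the move is accepted, giving $J(K_\tau)\le J(\eta_\tau)\le J^*+\epsilon/2$, or it is rejected, giving $J(K_\tau)=J(K_{\tau-1})<J(\eta_\tau)+\epsilon/2\le J^*+\epsilon$.

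Monotonicity then gives $J(K_\infty)\le J^*+\epsilon$. This holds on the event $\{\tau<\infty\}$, which has probability one. Together with Step 2, this yields $K_\infty$ in the stated $\epsilon$-optimal set.
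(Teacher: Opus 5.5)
Your three steps are the paper's proof of Theorem~\ref{th1} carried over to $\Omega_{r,R}$, with Theorem A2 in place of A1, which is what the paper intends. The finite-switching and hitting-time arguments are correct. The genuine gap is the sentence in Step~2 claiming that the case $K_\infty=K_0\notin\Omega_{r,R}$ is excluded. It is not excluded. Every candidate $\eta_i,w_i$ lies in $\Omega_{r,R}$, so $J(\eta_i),J(w_i)\ge J^*$. If $J(K_0)<J^*+\epsilon/2$, no candidate ever passes the test $J\le J(K_0)-\epsilon/2$, so $K_i\equiv K_0$ and $K_\infty=K_0\notin\Omega_{r,R}$.

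This situation is not exotic. $J^*$ is a minimum over the smaller set, so $J(K_0)$ may even lie below $J^*$. Moreover, whenever a minimizer $K^*$ lies on $\partial\Omega_{r,R}$, points of the open set $\Omega\setminus\Omega_{r,R}$ arbitrarily close to $K^*$ have $J$ arbitrarily close to $J^*$. Your own ``unless'' clause describes exactly this case, so the sentence does not close it. No argument can close it, because the membership part of the conclusion is false there.

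What your proof actually establishes is $J(K_\infty)\le J^*+\epsilon$ with $K_\infty\in\Omega_{r,R}\cup\{K_0\}$. It also gives $K_\infty\in\Omega_{r,R}$ almost surely when $K_0\in\Omega_{r,R}$ or $J(K_0)>J^*+\epsilon/2$. In the latter case, some $\eta_i$ a.s. lands in the positive-measure relative neighbourhood $\{K\in\Omega_{r,R}: J(K)<J(K_0)-\epsilon/2\}$ of $K^*$, so a move occurs and $K_i\in\Omega_{r,R}$ thereafter. The paper's proof has the same hole: it simply asserts $K_{i_0}\in S$ and never considers $I_0=\emptyset$. The right fix is to restrict $K_0$ to $\Omega_{r,R}$ or weaken the conclusion, not to argue the case away.

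A smaller point concerns Step~3. Your ``increase $k_2$, then shrink'' mechanism fails at points where $\|K\|=R$ is active together with $k_0\ge r$ or $\bar k_1\ge r$, because scaling toward the origin violates those constraints. Here is an argument that works.
\begin{itemize}
\item The direction $(d_0,d_1,M)$ with $d_0,d_1>0$ and $M$ large has positive directional derivative for all three non-ball constraint functions. Hence $A=\{k_0\ge r,\ \bar k_1\ge r,\ \bar k_1\bar k_2-k_0\ge r\}$ is the closure of its interior.
\item $\|K\|$ has no non-global local minimum on $A$. At such a point one checks $k_0=r$ and $\bar k_1\bar k_2=2r$, and the remaining function $(\bar k_1+L_1)^2+(L_2+2r/\bar k_1)^2$ is convex in $\bar k_1>0$.
\item Therefore, for $R>\min_A\|K\|$, every point of $\Omega_{r,R}$ is a limit of interior points.
\end{itemize}
For $R=\min_A\|K\|$ the set $\Omega_{r,R}$ is a single point and uniform sampling on it is undefined, so this condition on $R$ has to be assumed anyway.
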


\section{Simulations}
\vspace{-10pt}
In this section, we present numerical simulations of the proposed HRS-KW algorithm, and compare its performance against both the RS and KW algorithms.

In the first simulation, we let the function $f(x_1,x_2,u)=2\sin x_1+x_2+u$ in the optimization problem \eqref{op} with performance cost $\int_0^{T}e^2(t)+u^2(t) \mathrm{d}t$. 
According to Theorem \ref{th2}, we choose the following feasible set:
\begin{align*}
S_1 &= \left\{K\in\mathbf{R}^3 \big| \|K\| \leq 20,~ k_0\geq 0.001,~k_1 \geq 2.001, \right. \\
&~~~~~~~~~~~~~~~~~ \left. (k_1-2)(k_2-1)\geq k_0+0.001 \right\}.
\end{align*}
\vspace{-15pt}
\begin{figure}[H]
    \centering
    \includegraphics[width=1\linewidth]{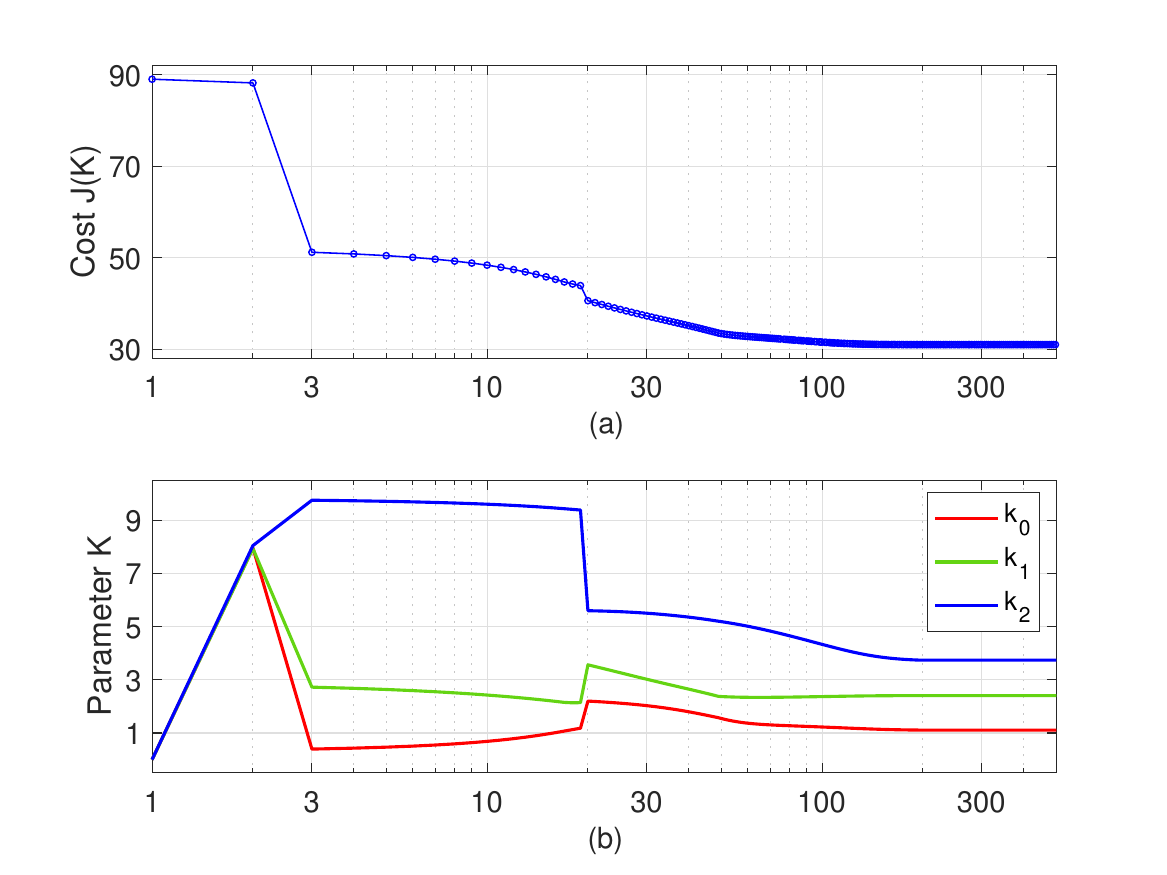}
    \caption{Evolution of $J$ and $K$ under HRS-KW algorithm. The time horizon is set to $T = 5$, $y^* = 1$, $x(0) = [3~2]^\mathsf{T}$. In the KW step, the perturbation constant $c = 0.001$, the learning rate $\alpha = 0.01$, the descent threshold $\epsilon = 0.0001$, the initial PID gain $K_0 = [8~8~8]^\mathsf{T}$.}
    \label{fig:rskw-simu-1}
\end{figure}
\vspace{-18pt}
\begin{figure}[H]
    \centering
    \includegraphics[width=1\linewidth]{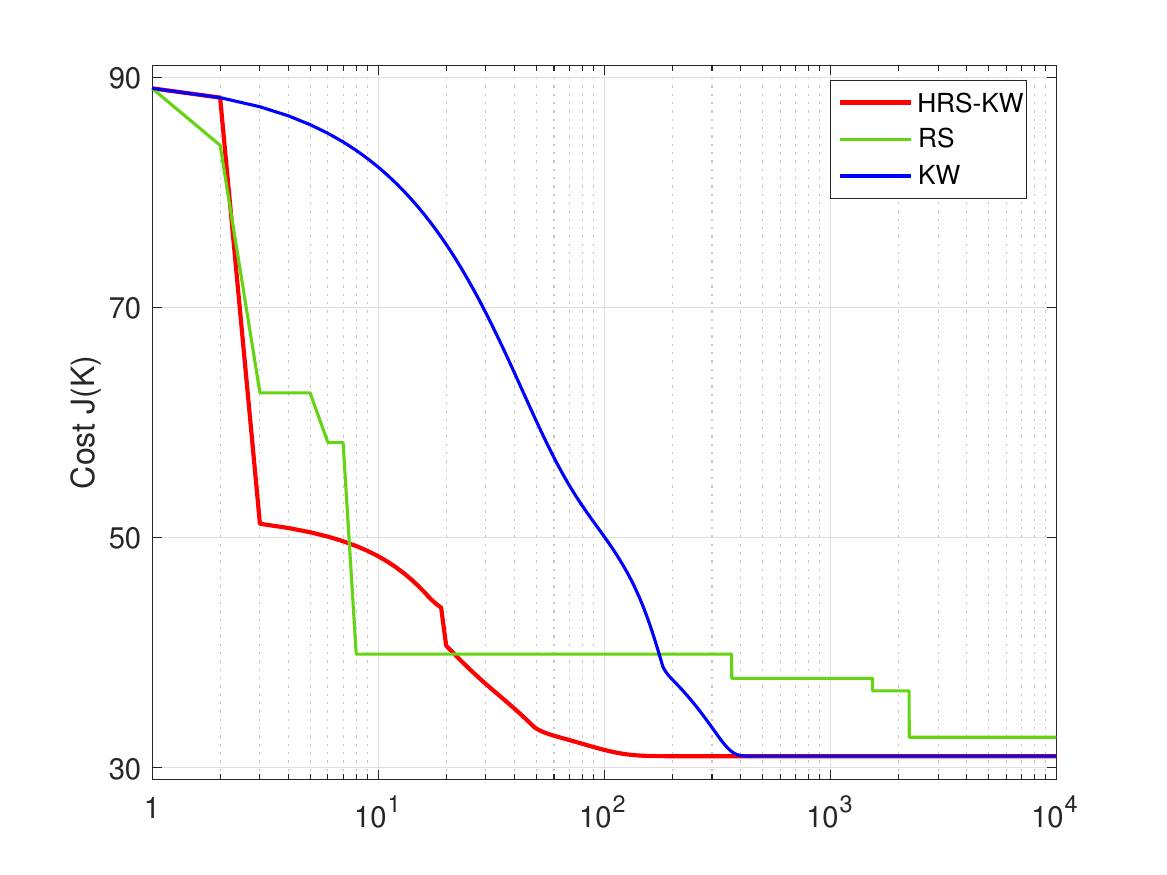}
    \caption{Evolution of $J$ under HRS-KW, RS, KW algorithms. The simulation parameters are the same as Fig. \ref{fig:rskw-simu-1}.}
    \label{fig:duibi-simu-1}
\end{figure}
\vspace{-10pt}
Fig. \ref{fig:rskw-simu-1} illustrates the evolution of cost $J(K)$ and parameter $K$ under HRS-KW algorithm. 
We can see that both cost $J$ and $K$ converge within approximately $250$ iterations, with limits $K^*=[1.1097~2.4056~3.7384]^\mathsf{T}$ and $J(K^*)=30.9969$. Notably, the cost curve exhibits several steep descents, which are attributed to the random search step. Fig. \ref{fig:duibi-simu-1} illustrates that the HRS-KW algorithm converges faster than both the RS and KW algorithms. Both HRS-KW and KW eventually converge to the same value; however, aided by random search, our proposed HRS-KW algorithm exhibits several steep descents, which significantly accelerate the convergence process.
In contrast, the RS algorithm exhibits a much slower convergence process, with many updates failing to yield effective descent.
In the second simulation, we consider the optimization problem \eqref{op} with $f(x_1,x_2,u)=x_1 - x_1^3 - 0.2x_2+u$ and the performance cost $50e^2(T)+\int_0^T (e^2(t)+ 2u^2(t))  \mathrm{d}t$. We choose the feasible set (see Theorem \ref{th2})
\begin{align*}
S_2 &= \left\{K\in\mathbf{R}^3 \big| \|K\| \leq 20,~ k_0\geq 0.001,~k_1 \geq 2.001, \right. \\
&~~~~~~~~~~~~~~~~~ \left. (k_1-1)(k_2+0.2)\geq k_0+0.001 \right\}.
\end{align*}

From Fig. \ref{fig:duibi-simu-2}, we find that the KW algorithm can indeed become trapped in a local minimum of $J=12.2430$ (the corresponding PID gain $K=[0.0348~1.0670~0.3345]^\mathsf{T}$). 
\begin{figure}[H]
    \centering
    \includegraphics[width=1\linewidth]{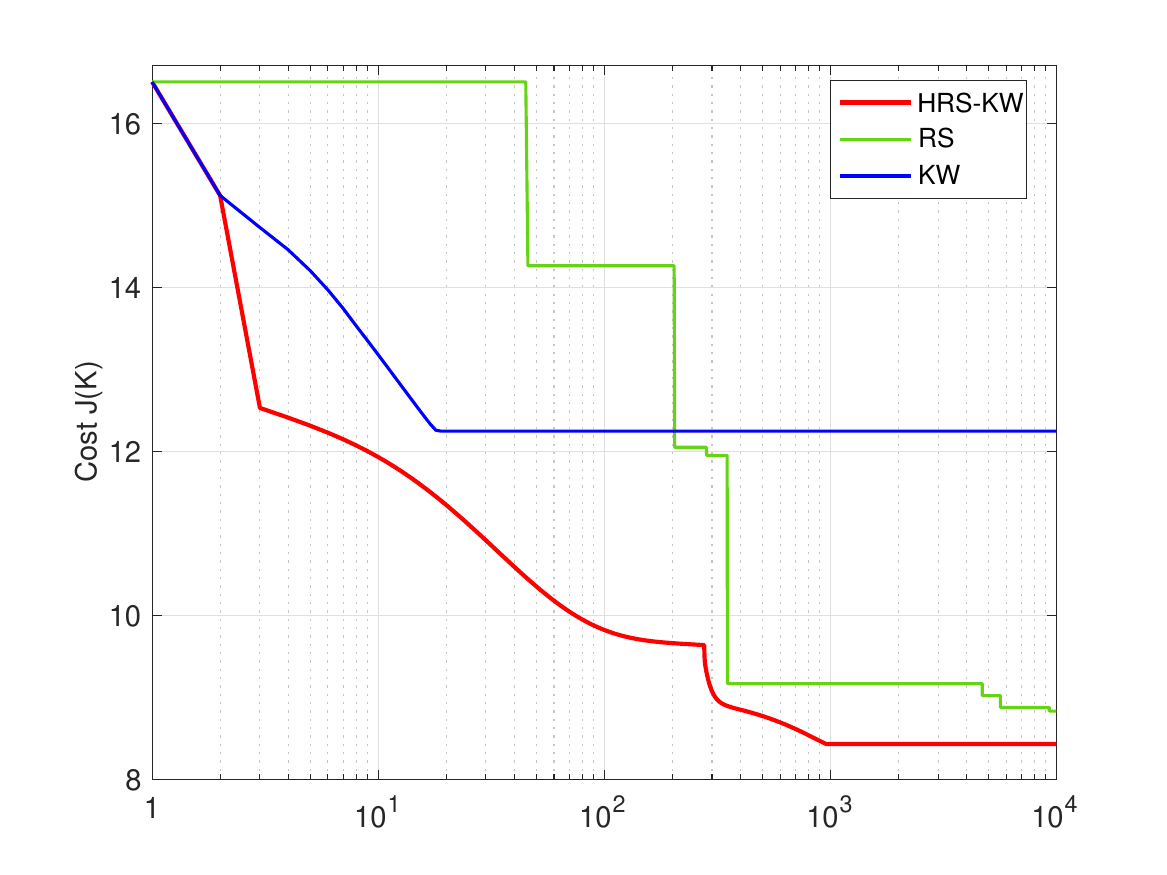}
    \caption{Evolution of $J$ under HRS-KW, RS and KW algorithms. The time horizon is set to $T = 3$, $y^* = 1$, $x(0) = [-1.2~0]^\mathsf{T}$. In the KW step, the perturbation constant $c = 0.001$, the learning rate $\alpha = 0.01$, the descent threshold $\epsilon = 0.0001$, the initial PID gain $K_0 = [0.4~1.5~0.8]^\mathsf{T}$.}
    \label{fig:duibi-simu-2}
\end{figure}
\vspace{-15pt}
\section{Conclusion}
This paper studies the optimal PID tuning for a class of nonlinear uncertain systems.  
Our control objective is to drive the system output to a prescribed setpoint while minimizing a general control performance cost that penalizes output errors and control effort over a finite horizon. 
We propose an iterative learning algorithm, termed HRS-KW algorithm, which does not require any structural information about the uncertain dynamical systems.  
We show that the proposed algorithm converges almost surely to an $\epsilon$-optimal solution for the PID parameters without assuming the convexity of the performance cost, while guaranteeing closed-loop stability throughout the learning process.
For future investigation, it would be interesting to discuss the convergence speed of our algorithm, optimize the design of its key parameters (e.g., the descent threshold $\epsilon$, the learning rate $\alpha$, etc.), and the integration with other optimization methods to further accelerate the tuning process.

\bibliography{ifacconf}             
\vspace{10pt}
\appendix
\vspace{10pt}
\section{Analysis of Example 1}
\vspace{17pt}
First, we derive the equality \eqref{example-J}.
Let us denote
\begin{align*}
z_*=A_c^{-1}\begin{bmatrix}
    0\\0\\ay^*
\end{bmatrix},~~
z(t)=\begin{bmatrix}
    \textstyle\int_0^te(s)\mathrm{d}s\\e(t)\\\dot e(t)
\end{bmatrix}-z_*,    
\end{align*} 
then $u(t)=K^\mathsf{T}(z(t)+z_*)$ and $\dot z(t)= A_c z(t)$, where $A_c$ is given by \eqref{Ac}.
Recall $P=KK^\mathsf{T}+\lambda \text{ diag}\{0,1,0\}$, we have
\begin{align*}
J(K)=&\textstyle\int_0^{t_f} \lambda|e(t)|^2+|u(t)|^2 \mathrm{d}t\\=&\textstyle\int_0^{t_f}(z(t)+z_*)^\mathsf{T} P(z(t)+z_*)\mathrm{d}t\\
=&\textstyle\int_0^{t_f} z^\mathsf{T}(t)P z(t) \mathrm{d}t+2\int_0^{t_f} z^\mathsf{T}_* P z(t)\mathrm{d}t+t_fz^\mathsf{T}_*Pz_*.
\end{align*}
Since $A_c^\mathsf{T} Q+QA_c+P=0$, we have 
\begin{align*}
    &\textstyle\int_0^{t_f} z^\mathsf{T}(t)Pz(t) \mathrm{d}t\\=&-\textstyle\int_0^{t_f} z^\mathsf{T}(t)(A_c^\mathsf{T} Q+QA_c)z(t) \mathrm{d}t\\
    =&-z^\mathsf{T}(t)Qz(t)\big|_{0}^{t_f}=z^\mathsf{T}(0) Qz(0)-z^\mathsf{T}(t_f)Qz(t_f)\\
    =&z_0^\mathsf{T} Qz_0-z_1^\mathsf{T} Qz_1,
\end{align*}
where we have used the facts  that $z_0=z(0)$ and $z_1=e^{t_fA_c}z_0=z(t_f)$ in the last equality.
Besides, 
\begin{align*}
\textstyle\int_0^{t_f} z^\mathsf{T}_* P z(t)\mathrm{d}t=&z^\mathsf{T}_* P\textstyle\int_0^{t_f}e^{A_c t}\mathrm{d}tz(0)\\=&z_*^\mathsf{T} P A_c^{-1}(e^{t_f A_c}-I)z(0)\\
=&z_*^\mathsf{T} P A_c^{-1}(z_1-z_0),
\end{align*}
 we finally obtain \eqref{example-J}.

 We next show that $J$ is not convex when
\[
a=b=-1,~\lambda=1,~y^*=1,~t_f=5,~x_1(0)=1,~x_2(0)=0.
\]
For this case,  the PID parameter set is 
\[
\Omega_{0}=\{K|k_0>0,~k_1>-1,~(k_1+1)(k_2+1)>k_0\}.
\]
Let us consider two sets of parameters
\begin{align*}
    &K_1=\begin{bmatrix}
        0.2&0.9&0.5
    \end{bmatrix}^\mathsf{T},
    ~K_2=\begin{bmatrix}
        6.0&0.4&3.6
    \end{bmatrix}^\mathsf{T},\\
    &K_3=\begin{bmatrix}
        0.5&2.4&1.1
    \end{bmatrix}^\mathsf{T},
    ~K_4=\begin{bmatrix}
        0.8&3.1&-0.9
    \end{bmatrix}^\mathsf{T}
\end{align*}
and their midpoints are 
\begin{align*}
 &\bar K=(K_1+K_2)/2=
\begin{bmatrix}
    3.10&0.65&2.05
\end{bmatrix}^\mathsf{T},\\
 &\bar{\bar K}=(K_3+K_4)/2=\begin{bmatrix}
     0.65&2.75&0.1
 \end{bmatrix}^\mathsf{T}.   
\end{align*}
These six sets of parameters are all in $\Omega_{0}$. 
Through numerical calculations, we can obtain that
\begin{align*}
&J(K_1)\approx3.6597,~J(K_2)\approx10.4196,\\
&J(K_3)\approx5.5962,~J(K_4)\approx34.1031,\\
&J(\bar K)\approx 8.7438 ,~J(\bar{\bar K})\approx8.2018,\\
    &J(\bar K)>(J(K_1)+J(K_2))/2\approx 7.0396,\\
     &J(\bar{\bar K})< (J(K_3)+J(K_4))/2\approx19.8496.    
\end{align*}
Hence, $J(K)$ exhibits neither convexity nor concavity.
\end{document}